\newtheorem*{theorem}{Theorem}
\begin{document}

\preprint{APS/123-QED}

\title{Characteristics, Implementation and Applications of Special Perfect Entangler Circuits}

\author{M. Karthick Selvan}
\email{karthick.selvan@yahoo.com}%

\author{S. Balakrishnan}%
 \email{physicsbalki@gmail.com}
\affiliation{Department of Physics, School of Advanced Sciences, Vellore Institute of Technology, Vellore - 632014, Tamilnadu, India.}%



\begin{abstract}
We discuss the characteristics of special perfect entanglers and construct single parameter two-qubit circuits which are locally equivalent to special perfect entanglers. We present the results obtained from the implementation of one of the circuits using cross-resonance interaction and discuss their applications. First, we show that the ability of two-qubit gates to create entangled states can be described using the chords present in the argand diagram of squared eigenvalues of nonlocal part of two-qubit gates. We show that the entangling power of a two-qubit gate is proportional to the mean squared length of the chords. We deduce the entangling characteristics of special perfect entanglers from the argand diagram associated with them. We implement a special perfect entangler circuit using echoed cross-resonance gate and pulse-level programming for nine different circuit parameters. For a particular input state, we perform quantum state tomography and calculate state fidelity and concurrence of the obtained output density matrices. We also measure the average gate fidelity for B gate circuit. We construct two universal two-qubit quantum circuits using the special perfect entangler circuits. These universal circuits can be used to generate all two-qubit gates. We show that $(n-1)$ B gate circuits can be used to generate $n$-qubit GHZ and perfect W states. We generate three-qubit perfect W state. Perfect W state generated using pulse-level programming shows better fidelity than the state generated using echoed cross-resonance gate.
\end{abstract}

\maketitle 

\section{Introduction} 
Universality is an important aspect of entangling two-qubit gates in the circuit model of quantum computation~\cite{DiVincenzo1995,Barenco1995r,Deutsch1995,Lloyd1995,Barenco1995}. It implies the ability of an entangling two-qubit gate to simulate an arbitrary $n$-qubit unitary operation by multiple applications of the entangling two-qubit gate along with suitable single-qubit gates. Universality can be attributed to the nonlocal characteristics of two-qubit gates
such as their ability to create entanglement~\cite{Nielsen2003}. Usage of an entangling two-qubit gate as universal two-qubit gate relies on the maximum number of its applications required to generate all other two-qubit gates.  

Special perfect entanglers (SPEs) are entangling two-qubit gates with maximum entangling power~\cite{Rezakhani2004} which is defined as the average entanglement generated over all product states~\cite{Zanardi2000}. SPEs can transform an orthonormal product basis into an orthonormal maximally entangled basis~\cite{Rezakhani2004}. Three applications of SPEs are required to generate all two-qubit gates~\cite{Zhang2005}. B gate is an exemption in SPEs. It can generate all two-qubit gates in two applications~\cite{Zhang2004}. Among all the SPEs, only the gates belonging to controlled-NOT (CNOT) equivalence class are used as universal two-qubit gates for doing quantum computation in quantum hardwares~\cite{Debnath2016,ADC2013,Sheldon2016,Malekakhlagh2020}.

In this paper, we discuss the characteristics of argand diagram of squared eigenvalues of nonlocal part of SPEs. This argand diagram is used to explain the condition for perfect entanglers~\cite{Makhlin2002,Zhang2003}. The argand diagram of eigenvalues of unitray operators is also used to provide the conditions for discriminating two unitary operators~\cite{Acin2001,Duan2007,Chefles2005} and creation of perfect entanglers~\cite{Yu2010}. Those conditions are provided in terms of arc length~\cite{Acin2001,Duan2007,Yu2010}. The quantity that we consider, in this paper, is the chord length. The squared eigenvalues of nonlocal part of a two-qubit gate exist on a unit circle in the complex plane and form the four vertices of a quadrilateral. These four vertices are pairwise connected by six chords.   

In the case of symmetric two-qubit states, the concurrence is shown to be proportional to the square of chordal distance between two Majorana stars~\cite{Liu2016,Galindo2022}. We show that the length of a specific chord in the argand diagram of squared eigenvalues of nonlocal part of a two-qubit gate is proportional to the amount of entanglement generated by that two-qubit when it acts on a pair of orthonormal product states constructed using a pair of maximally entangled states as shown in Ref.~\cite{Kraus2001}. We show that the entangling power of a two-qubit gate is proportional to the average of squared length of six chords. In addition, we prove a condition for the two-qubit gates, which can also be inferred from Ref.~\cite{Kraus2001}, that can transform a pair of orthonormal product states into a pair of orthonormal maximally entangled states. 

The argand diagrams of squared eigenvalues of nonlocal part of SPEs have a common feature. There exist two specific chords passing through the origin for all SPEs. From this feature, we provide the exact orthonormal product basis which can be transformed into maximally entangled basis by all SPEs. For B gate, the quadrilateral formed by the squared eigenvalues is square in shape and hence it spans maximum area of the unit circle. By generalizing the B gate circuit given in Ref.~\cite{Zhang2004}, we construct four single parameter circuits and show that all of them are locally equivalent to SPEs. These circuits can be easily implemented using cross-resonance interaction where the control qubit is excited at the resonance frequency of target qubit~\cite{Rigetti2010,Chow2011}.  

In addition, we implement one of the SPE circuits in IBM quantum processor. Echoed cross-resonance (ECR) gate~\cite{ADC2013,Sheldon2016,Malekakhlagh2020} is used as basis gate in IBM quantum processors. Any two-qubit or multi-qubit gate can be implemented using ECR gate and suitable single-qubit gates. Alternatively, two-qubit gates can also be implemented in IBM quantum processors by pulse-level programming~\cite{Alexander2020}. This allows the rescaling of cross-resonance pulse area and reduce the implementation time~\cite{Stenger2021,Earnest2021,Satoh2022,Chen2022,Zhang2024}. We show the implementation of SPE circuit by both methods for a set of parameters in IBM quantum processor. In the pulse-level programming, we only rescale the area of cross-resonance pulses required to implement a part of SPE circuits. For a particular input state, we measure the output density matrices of the SPE circuit by performing quantum state tomography (QST) and calculate the state fidelity and concurrence. We also measure the average gate fidelity for the B gate circuit. Both implementations of SPE circuit, using ECR gates and pulse-level programming, provide nearly the same results. 

Universal two-qubit quantum circuit (UTQQC) is a parameterized two-qubit circuit that can be used to generate all two-qubit gates by changing the circuit parameters~\cite{Fan2005}. Universal two-qubit quantum circuits (UTQQCs) can be constructed using SPEs~\cite{Selvan2023}. We construct two UTQQCs using SPE circuits. These UTQQCs can be implemented in IBM quantum processors using cross-resonance pulses to generate any two-qubit gate operation. We also construct $n$-qubit circuits using $(n-1)$ B gate circuits to generate $n$-qubit GHZ state~\cite{Greenberger1990} and $n$-qubit perfect W-state~\cite{Agrawal2006}. Both states can be used for teleportation and superdense coding~\cite{Agrawal2006,Karlsson1998,Hao2001}. We generate three-qubit perfect W state using ECR gate and pulse-level programming and perform QST experiment. Better fidelity is obtained in the case where the three-qubit perfect W state is generated by pulse-level programming.
  
This paper is organized as follows. In section II, we uncover the relation between entangling power of two-qubit gates and chord lengths in the argand diagram of squared eigenvalues of nonlocal part of two-qubits. In section III, we discuss the characteristics of SPEs and introduce the SPE circuits. In section IV, we describe the implementation of SPE circuits and present the experimental results. In section V, we discuss the construction of UTQQCs and generation of genuinely entangled multi-qubit states as applications of SPE circuits. In section VI, we provide the conclusion. 

\section{A Revisit to Entanglement Creation by Two-Qubit Gates} 

Any two-qubit gate $(U)$ can be written in the following form~\cite{Zhang2003}.

\begin{equation}\label{eq1}
U = e^{i\alpha}k_1 U_d(c_1, c_2, c_3) k_2,
\end{equation}

where $k_1, k_2 \in SU(2) \otimes SU(2)$ are the local parts of $U$ and $U_d (c_1, c_2, c_3)$ is the nonlocal part of $U$ which can be expressed as 

\begin{equation*}
U_d (c_1, c_2, c_3) = \exp \bigg[ \dfrac{i H(c_1, c_2, c_3)}{2} \bigg]
\end{equation*}   
with 
\begin{equation}\label{eq2}
H(c_1, c_2, c_3) = c_1(\sigma_x \otimes \sigma_x) + c_2 (\sigma_y \otimes \sigma_y) + c_3 (\sigma_z \otimes \sigma_z)
\end{equation}   

The co-efficients $(c_1, c_2, c_3)$ are called Cartan co-ordinates. If the two-qubit gates, $U_1$ and $U_2$ are such that $U_1 = k_1 U_2 k_2$ for some $k_1, k_2 \in SU(2) \otimes SU(2)$ then they are said to be locally equivalent and it is denoted as $U_1 \sim U_2$. The set of all locally equivalent two-qubit gates form a local equivalence class. The local invariants can be used to distinguish gates belonging to different local equivalence classes~\cite{Makhlin2002,Zhang2003}. They are defined as 

\begin{equation}\label{eq3}
G_1 = \dfrac{tr^2[m(U)]}{16 \det{U}},
\end{equation}
\begin{equation}\label{eq4}
G_2 = \dfrac{tr^2[m(U)] - tr[m(U)^2]}{4 \det{U}},
\end{equation}

where $m(U)= (Q^{\dagger}U Q)^T(Q^{\dagger} U Q)$ with 

\begin{equation}\label{eq5}
Q = \dfrac{1}{\sqrt{2}}
\begin{bmatrix}
1 & 0 & 0 & i \\ 0 & i & 1 & 0 \\ 0 & i & -1 & 0 \\ 1 & 0 & 0 & -i
\end{bmatrix}.
\end{equation}

\begin{figure}[h]
\includegraphics[scale=0.55]{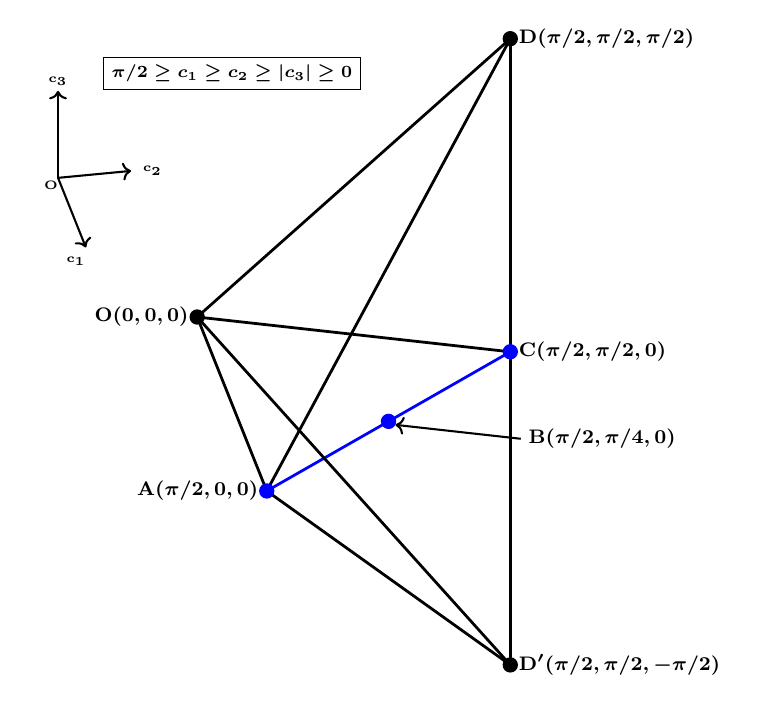}
\caption{Geometrical representation of local equivalence classes of two-qubit gates. Each point of the tetrahedron satisfy the condition: $\pi/2 \geq c_1 \geq c_2 \geq \vert c_3 \vert \geq 0$ and represent a local equivalence class. The blue line connecting the points $\text{A}(\pi/2, 0, 0)$ and $\text{C}(\pi/2, \pi/2, 0)$ represent SPEs. The midpoint of this line, $\text{B}(\pi/2, \pi/4, 0)$, represent B gate equivalence class.}
\label{fig1}
\end{figure}

The set of all local equivalence classes of two-qubit gates can be represented by the points of a tetrahedron (also known as Weyl chamber of two-qubit gates~\cite{Zhang2003}) as shown in FIG.~\ref{fig1}. The column vectors of the matrix $Q$ [Eq.~\ref{eq5}] form magic basis: $\{ \vert \Psi_1 \rangle = \frac{1}{\sqrt{2}} \left[ \vert 00 \rangle + \vert 11 \rangle \right], \vert \Psi_2 \rangle = \frac{i}{\sqrt{2}} \left[ \vert 01 \rangle + \vert 10 \rangle \right], \vert \Psi_3 \rangle = \frac{1}{\sqrt{2}} \left[ \vert 01 \rangle - \vert 10 \rangle \right], \vert \Psi_4 \rangle = \frac{i}{\sqrt{2}} \left[ \vert 00 \rangle - \vert 11 \rangle \right] \}$. They are the eigenbasis of $U_d(c_1, c_2, c_3)$ corresponding to the eigenvalues: $\{e^{ih_1/2}, e^{ih_2/2}, e^{ih_3/2}, e^{ih_4/2} \}$ with $h_j$'s being the eigenvalues of $H(c_1, c_2, c_3)$ given as follows. 

\[
h_1 = c_1 -c_2 + c_3, 
\]
\[
h_2 = c_1 + c_2 - c_3,
\]
\[
h_3 = -c_1-c_2-c_3,
\]
\begin{equation}\label{eq6}
h_4 = -c_1 + c_2 + c_3.
\end{equation} 

In terms of magic basis, the nonlocal part, $U_d(c_1, c_2, c_3)$, of a two-qubit gate can be expressed as follows. 

\begin{equation}\label{eq7}
U_d(c_1, c_2, c_3) =  e^{ih_3/2} \left[ \sum_{j=1}^4 e^{i(h_j-h_3)/2} \vert \Psi_j \rangle \langle \Psi_j \vert \right]. 
\end{equation}

Now we consider a product state $\vert \psi \rangle$ which can be expressed in the magic basis as $\vert \psi \rangle = \sum_{j=1}^4 \phi_j \vert \Psi_j \rangle$. Using the entanglement measure described in Ref.~\cite{Zhang2003}, the amount of entanglement created by $U_d$ upon acting on the product state $\vert \psi \rangle$ can be written as follows.

\begin{equation}\label{eq8}
\left| E \left( U_d \vert \psi \rangle \right)\right| = \dfrac{1}{2} \left| \sum_{j=1}^4 \phi_j^2 e^{i(h_j-h_3)} \right|.
\end{equation}

The co-efficients $\phi_j$'s satisfy the conditions~\cite{Zhang2003}: $\sum_j \phi_j^2 = 0$ and $\sum_j \vert \phi_j \vert^2 = 1$. When $U_d$ transforms the product state $\vert \psi \rangle$ into a maximally entangled state, the following relation is satisfied. 

\begin{equation}\label{eq9}
\sum_{j=1}^4 \vert \phi_j \vert^2 e^{i(h_j-h_3)} = 0. 
\end{equation}

This condition implies that the convex hull of $e^{i(h_j - h_3)}$'s should contain zero as shown in FIG.~\ref{fig2}; $e^{i(h_j - h_3)}$'s lie on a unit circle and there are six chords connecting them. 

\begin{figure}[h]
\includegraphics[scale=0.55]{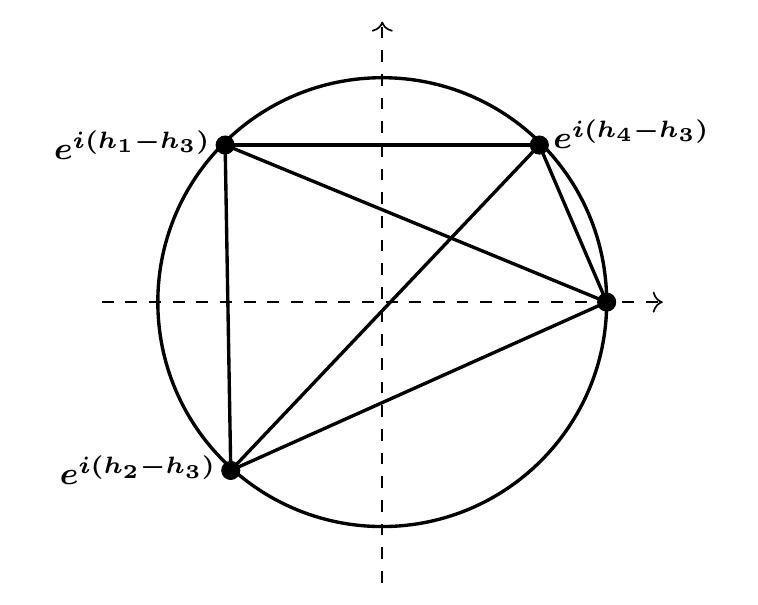}
\caption{Illustration of the condition for transforming a product state into a maximally entangled state.}
\label{fig2}
\end{figure}

\begin{theorem}
If the difference between two eigenvalues of $H(c_1, c_2, c_3)$ is such that $h_j - h_k = \pi$ then there exists a pair of orthonormal product states that can be transformed into a pair of orthonormal maximally entangled states by $U_d(c_1, c_2, c_3)$. 
\end{theorem}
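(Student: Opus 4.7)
The plan is to use the product-state criterion from Ref.~\cite{Zhang2003} that is stated just before Eq.~(9): a state $\vert \psi \rangle = \sum_{l=1}^4 \phi_l \vert \Psi_l \rangle$ is a product state iff $\sum_l \phi_l^2 = 0$ (together with $\sum_l \vert \phi_l \vert^2 = 1$). The hypothesis $h_j - h_k = \pi$ means that the squared eigenvalues $e^{i(h_j - h_3)}$ and $e^{i(h_k - h_3)}$ are antipodal on the unit circle, so the chord joining them passes through the origin. I will construct, directly, a pair of product states supported only on $\vert \Psi_j \rangle$ and $\vert \Psi_k \rangle$ and use the antipodal relation to verify Eq.~(9).

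The first step is to exhibit the candidate states. I would propose
\begin{equation*}
\vert \psi_{\pm} \rangle \;=\; \tfrac{1}{\sqrt{2}} \bigl( \vert \Psi_j \rangle \pm i \, \vert \Psi_k \rangle \bigr),
\end{equation*}
which have magic-basis coefficients $\phi_j = 1/\sqrt{2}$, $\phi_k = \pm i/\sqrt{2}$, and zero otherwise. Routine checks then yield $\sum_l \vert \phi_l \vert^2 = 1$, $\sum_l \phi_l^2 = \tfrac12 + (\pm i/\sqrt{2})^2 = 0$, and $\langle \psi_+ \vert \psi_- \rangle = \tfrac12(1 - 1) = 0$. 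By the magic-basis criterion, each $\vert \psi_\pm \rangle$ is therefore a product state, and the two form an orthonormal pair.

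The second step is to apply $U_d(c_1, c_2, c_3)$ and verify the images are maximally entangled. Using the $|\phi_l|^2$ values above, the left-hand side of Eq.~(9) becomes
\begin{equation*}
\sum_{l=1}^4 \vert \phi_l \vert^2 e^{i(h_l - h_3)} \;=\; \tfrac{1}{2} e^{i(h_j - h_3)} + \tfrac{1}{2} e^{i(h_k - h_3)}.
\end{equation*}
The antipodal condition $h_j - h_k = \pi$ forces $e^{i(h_j - h_3)} = - e^{i(h_k - h_3)}$, so the sum vanishes identically. Hence $U_d \vert \psi_\pm \rangle$ are both maximally entangled. Since $U_d$ is unitary, the orthogonality established in the previous step is preserved, yielding an orthonormal pair of maximally entangled images.

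The argument is essentially constructive, so there is no significant analytic obstruction; the only substantive ingredient is the product-state characterization $\sum_l \phi_l^2 = 0$ in the magic basis, which is a standard property already invoked in the paper just above Eq.~(9). The main subtlety to flag, rather than an obstacle, is that the construction is independent of which indices $(j,k)$ realize the antipodal chord, so the statement holds uniformly over all such pairs; this is the geometric content behind the phrase ``a chord passing through the origin'' that the subsequent discussion of SPEs will exploit.
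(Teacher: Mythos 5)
Your construction is the same as the paper's: the paper also takes the states $\left[\vert \Psi_j \rangle \pm i \vert \Psi_k \rangle\right]/\sqrt{2}$, verifies they are orthonormal product states via the magic-basis conditions, and then checks that their images under $U_d$ are maximally entangled. So in substance the proposal is correct and follows the paper's route.

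One step deserves a caveat, though. In your second step you conclude maximal entanglement of $U_d\vert\psi_\pm\rangle$ from the vanishing of $\sum_l \vert \phi_l \vert^2 e^{i(h_l - h_3)}$, i.e.\ from Eq.~(9). But the paper states Eq.~(9) only as a \emph{necessary} condition (``when $U_d$ transforms the product state into a maximally entangled state, the following relation is satisfied''), and it is not sufficient in general: for instance, with the B gate phases $h_l - h_3 \in \{0,\pi/2,\pi,3\pi/2\}$ and $\phi_1=\phi_2=1/2$, $\phi_3=\phi_4=i/2$, one has a product state satisfying Eq.~(9) whose image has $\vert E \vert = 1/(2\sqrt{2}) \neq 1/2$. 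The correct closing move --- which is what the paper does --- is to evaluate the entanglement measure of Eq.~(8) directly: for your states, $\vert E(U_d\vert\psi_\pm\rangle)\vert = \tfrac{1}{2}\left\vert \phi_j^2 e^{i(h_j-h_3)} + \phi_k^2 e^{i(h_k-h_3)} \right\vert = \tfrac{1}{4}\left\vert e^{i(h_j-h_3)} - e^{i(h_k-h_3)}\right\vert = \tfrac{1}{2}$ using $h_j - h_k = \pi$, so the images are indeed maximally entangled (equivalently, one can write out $U_d\vert\psi_\pm\rangle = e^{ih_j/2}\left[\vert\Psi_j\rangle \pm \vert\Psi_k\rangle\right]/\sqrt{2}$ as the paper does). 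With that substitution your argument is complete; the rest (product-state criterion, orthonormality, preservation of orthogonality under the unitary $U_d$) is fine.
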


\begin{proof}
Let $h_j$ and $h_k$ be two eigenvalues of $H(c_1, c_2, c_3)$ such that $h_j-h_k= \pi$. Then, in FIG.~\ref{fig2}, the chord connecting $e^{i(h_j-h_3)}$ and $e^{i(h_k-h_3)}$ passes through origin. This implies that there exists product states $\phi_j \vert \Psi_j \rangle + \phi_k \vert \Psi_k \rangle$ with $\phi_j$ and $\phi_k$ satisfying the conditions $\vert \phi_j \vert^2 e^{i(h_j-h_3)} + \vert \phi_k \vert^2 e^{i(h_k - h_3)} = 0$, $\vert \phi_j \vert^2 + \vert \phi_k \vert^2 = 1$, and $\phi_j^2 + \phi_k^2 = 0$. From these conditions, those product states can be written as $e^{i \theta} \left[ \dfrac{\vert \Psi_j \rangle \pm i \vert \Psi_k \rangle}{\sqrt{2}} \right]$. Ignoring the global phase, the states $\left[ \dfrac{\vert \Psi_j \rangle \pm i \vert \Psi_k \rangle}{\sqrt{2}} \right]$ are orthonormal product states and they are transformed into $e^{ih_j/2} \left[ \dfrac{\vert \Psi_j \rangle \pm \vert \Psi_k \rangle}{\sqrt{2}} \right]$ by $U_d(c_1, c_2, c_3)$. It can be verified that the value of $\vert E \vert$ is 0.5 for the transformed states and hence they are maximally entangled. 
\end{proof}

The length of the chords in FIG.~\ref{fig2} indicate the ability of $U_d$ to create entanglement. The amount of entanglement, $ \vert E(U_d \vert \psi^\pm_{jk} \rangle ) \vert$, created by $U_d$ acting upon the product states, $\vert \psi^\pm_{jk} \rangle = \left[ \dfrac{\vert \Psi_j \rangle \pm i \vert \Psi_k \rangle}{\sqrt{2}} \right]$, are the same and is proportional to the length of the chord $(L_{jk})$ connecting $e^{i(h_j - h_3)}$ and $e^{i(h_k - h_3)}$ in FIG.~\ref{fig2}. Specifically we have the following relation.

\begin{equation}\label{eq10a}
\vert E(U_d\vert \psi^\pm_{jk} \rangle ) \vert^2 = \dfrac{2 - 2 \cos(h_j - h_k)}{16} = \dfrac{L^2_{jk}}{16}.
\end{equation}

Taking average over the indices $jk$ on both sides, we get  

\begin{equation}\label{eq10b}
\overline{L_{jk}^2} = 16 \times \overline{\vert E(U_d\vert \psi^\pm_{jk} \rangle ) \vert^2}.
\end{equation}

In terms of Cartan co-ordinates, the expression of $\overline{L_{jk}^2}$ can be written as follows. 

\begin{equation}\label{eq10}
\overline{L_{jk}^2} = \dfrac{2}{3} \left[ 3 - \dfrac{1}{2}\left( \sum_{{p,q=1}; {p \neq q}}^3 \cos(2c_p) \cos(2c_q) \right) \right]
\end{equation}

The expression of entangling power $(e_p)$ of $U_d(c_1, c_2, c_3)$ in terms of Cartan co-ordinates can be written as follows~\cite{Rezakhani2004}. 

\begin{equation}\label{eq11}
e_p = \dfrac{1}{18} \left[ 3 - \dfrac{1}{2} \left( \sum_{{p,q=1}; {p \neq q}}^3 \cos(2c_p) \cos(2c_q) \right) \right]
\end{equation}

From Eqs.~\ref{eq10} and \ref{eq11}, the following relation can be obtained. 

\begin{equation}\label{eq12}
e_p = \dfrac{1}{12} \times \overline{L^2_{jk}}
\end{equation}

Thus the entangling power of any two-qubit gate with nonlocal part $U_d(c_1, c_2, c_3)$ is proportional to the average of squared length of chords in the argand diagram of squared eigenvalues of $U_d(c_1, c_2, c_3)$. Using Eq.~\ref{eq10b}, the above statement can be restated as the entangling power of any two-qubit gate with nonlocal part $U_d(c_1, c_2, c_3)$ is proportional to the average of squared value of entanglement measure (defined in Ref.~\cite{Zhang2003}) of the states resulted from the action of $U_d(c_1, c_2, c_3)$ on the product states $\vert \psi^\pm_{jk} \rangle$. 

\section{Characteristics of Special Perfect Entanglers and SPE Circuits}

\subsection{Characteristics of SPEs}

Special perfect entanglers (SPEs) are two-qubit gates with maximum entangling power~\cite{Rezakhani2004}. Their Cartan co-ordinates are $(\pi/2, c_2, 0)$ with $c_2 \in [0, \pi/2]$. The local invariants of SPEs are $G_1 = 0$ and $G_2 = cos(2c_2)$. In FIG.~\ref{fig1}, they are represented by the blue line connecting the points $\text{A}(\pi/2, 0, 0)$ and $\text{C}(\pi/2, \pi/2, 0)$. The points $\text{A}(\pi/2, 0, 0)$ and $\text{C}(\pi/2, \pi/2, 0)$ represent CNOT and Double-CNOT (or iSWAP) equivalence classes respectively. The midpoint $\text{B}(\pi/2, \pi/4, 0)$ represent B gate equivalence class. The nonlocal part of SPEs is of the following form. 
\begin{equation}\label{eq13}
U_d \left( \dfrac{\pi}{2}, c_2, 0 \right) = \exp \left[ \dfrac{i \pi}{4} (\sigma_x \otimes \sigma_x) + \dfrac{ic_2}{2} (\sigma_y \otimes \sigma_y) \right]
\end{equation}

\begin{figure}[h]
\begin{tabular}{cc}
\includegraphics[scale=0.35]{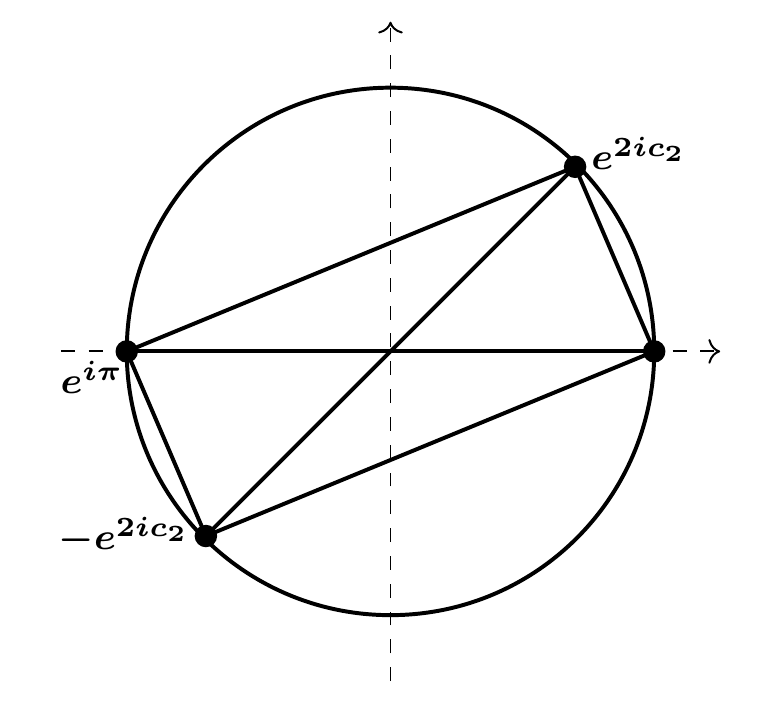} & \includegraphics[scale=0.35]{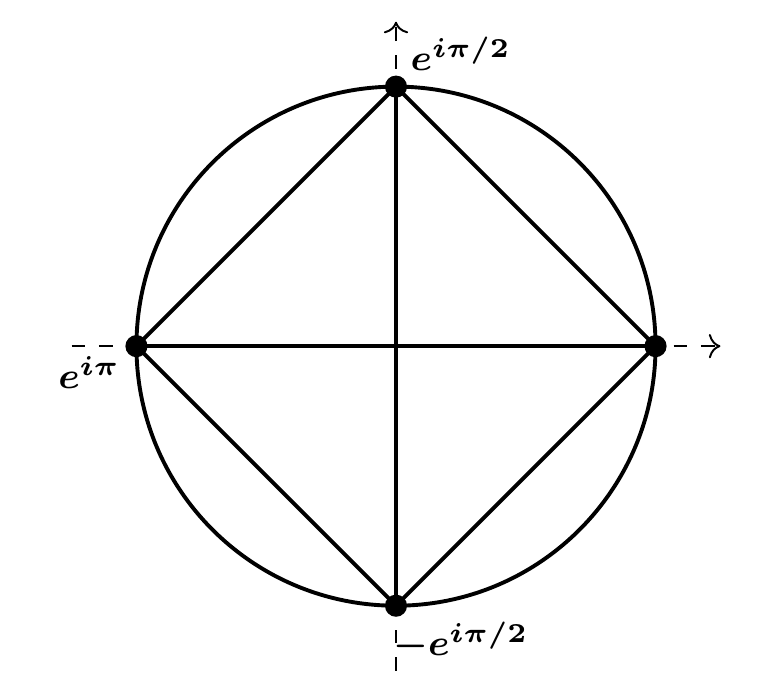} \\ (a) & (b)
\end{tabular}
\caption{Argand diagram of squared eigenvalues of nonlocal part of (a) a general SPE and (b) B gate.}
\label{fig3}
\end{figure}

The squared eigenvalues of $U_d(\pi/2, c_2, 0)$ form a unique structure in the complex plane as shown in FIG.~\ref{fig3}. The four points consist of two pairs of diametrically opposite points. Hence there exist two chords passing through origin. Consequently all SPEs transform the orthonormal product basis $\left\{ \vert \psi^\pm_{13} \rangle, \vert \psi^\pm_{24} \rangle \right\}$ into maximally entangled basis. 

Since $c_2 = 0$ for CNOT equivalence class, in FIG.~\ref{fig3}a, the second point $(e^{2ic_2})$ coincides with first point and the fourth point $(-e^{2ic_2})$ coincides with third point. Similarly for DCNOT equivalence class $(c_2 = \pi/2)$, the second $(e^{2ic_2})$ and fourth $(-e^{2ic_2})$ points coincide with the third and first points respectively. Hence for these two equivalence classes two chords vanish and the reamining four chords pass through origin. However, the average value of square of chord lengths $(8/3)$ is the same for all SPEs and it is twelve times their entangling power. 

Among SPEs, B gate with $c_2 = \pi/4$ is very special as it is the only gate which can generate all two-qubit gates in two applications~\cite{Zhang2004}. It is also the only local equivalence class which is invariant under mirror operation and hence it is represented by the center of the Weyl chamber in the space spanned by local invariants~\cite{Selvan2023}. It can be noted from FIG.~\ref{fig3}b that the convex hull of the square of eigenvalues of the nonlocal part of B gate encloses maximum area in the unit circle. 

\subsection{SPE Circuits}

We consider the single parameter $(\theta)$ circuits shown in FIG.~\ref{fig4}. The circuit shown in FIG.~\ref{fig4}d is generalization of the B gate circuit given in Ref.~\cite{Zhang2004}. Other three circuits are obtained by either changing the order of operations (FIG.~\ref{fig4}c), changing the control and target qubits (FIG.~\ref{fig4}b) or changing both of them (FIG.~\ref{fig4}a). In all four circuits, $\text{R}_x(\theta) = \exp \left(\dfrac{-i\theta \sigma_x}{2} \right)$ and $\text{X} = \sigma_x$. The local invariants of all the four circuits are $G_1 = 0$ and $G_2 = \cos(\theta)$. Comparing with local invariants of SPEs, we get, $\theta = 2c_2$. Thus these circuits are locally equivalent to SPEs. When $\theta=0$, the circuits belong to CNOT equivalence class and when $\theta= \pi$, they belong to Double-CNOT equivalence class. 

\begin{figure}[h]
\begin{tabular}{cc}
\includegraphics[scale=0.35]{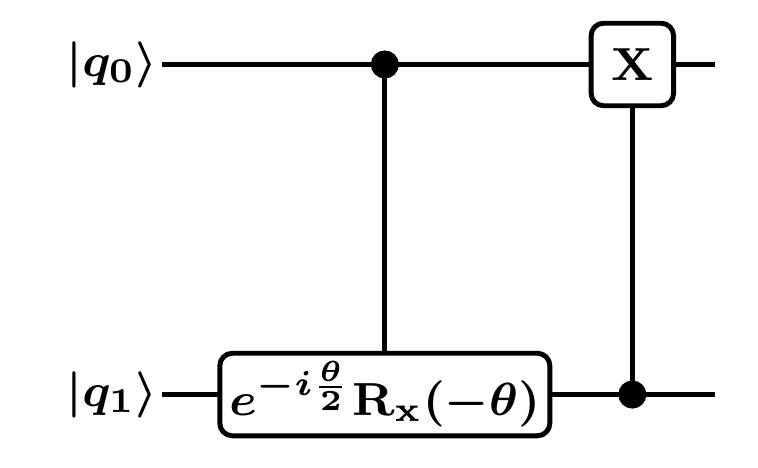} & \includegraphics[scale=0.35]{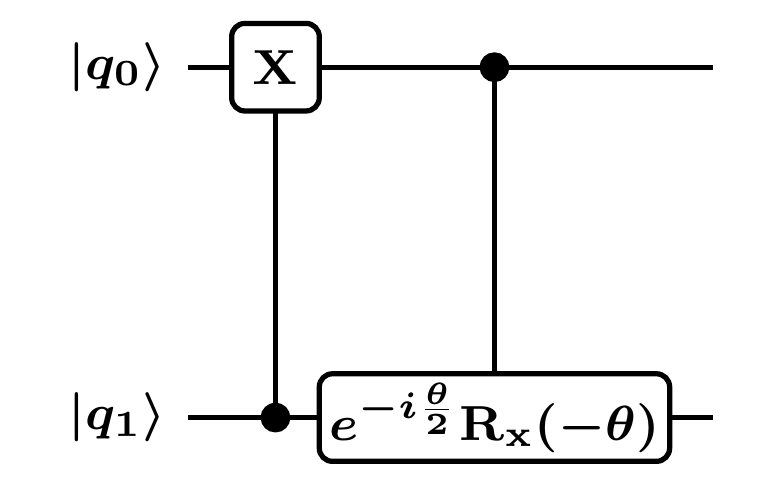} \\ (a) & (b) \\ \includegraphics[scale=0.35]{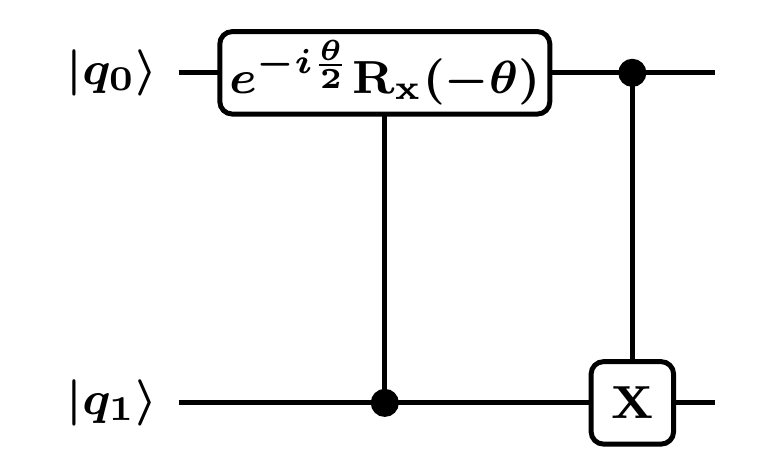} & \includegraphics[scale=0.35]{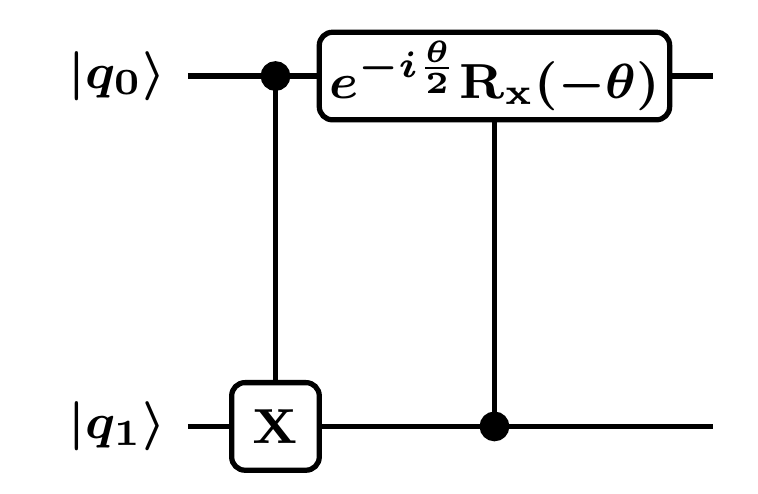} \\ (c) & (d)
\end{tabular}
\caption{Circuits locally equivalent to SPEs.}
\label{fig4}
\end{figure}

In this paper, we write the qubits in the order $\vert q_1 q_0 \rangle$. In this order, the matrix of the circuit shown in FIG.~\ref{fig4}a can be written as follows. 

\begin{equation*}
U_{\mathrm{SPE}}(\theta)= \left[ \begin{array}{c c c c} 1 & 0 & 0 & 0 \\ 0 & 1 & 0 & 0 \\ 0 & 0 & 0 & 1 \\ 0 & 0 & 1 & 0 \end{array} \right]  \times \left[ 
\begin{array}{c c c c}
1 & 0 & 0 & 0 \\ 0 & \frac{1+e^{-i\theta}}{2} & 0 & \frac{1-e^{-i \theta}}{2}  \\ 0 & 0 & 1 & 0 \\ 0 & \frac{1-e^{-i\theta}}{2} & 0 & \frac{1+e^{-i\theta}}{2}
\end{array}
\right]
\end{equation*}
\begin{equation}\label{eq14}
= \left[ 
\begin{array}{c c c c}
1 & 0 & 0 & 0 \\ 0 & \frac{1+e^{-i\theta}}{2} & 0 & \frac{1-e^{-i \theta}}{2} \\ 0 & \frac{1-e^{-i\theta}}{2} & 0 & \frac{1+e^{-i\theta}}{2} \\ 0 & 0 & 1 & 0 
\end{array}
\right].
\end{equation}

The SPE circuit shown in FIG.~\ref{fig4}a transforms the computational product basis $\{ \vert 00 \rangle, \vert 01 \rangle, \vert 10 \rangle, \vert 11 \rangle \}$ as follows. 

\begin{equation*}
\vert 00 \rangle \rightarrow \vert 00 \rangle,
\end{equation*} 
\begin{equation*}
\vert 01 \rangle \rightarrow \left[ \dfrac{1+e^{-i\theta}}{2} \right] \vert 01 \rangle + \left[ \dfrac{1-e^{-i\theta}}{2} \right] \vert 10 \rangle,
\end{equation*}
\begin{equation*}
\vert 10 \rangle \rightarrow \vert 11 \rangle,
\end{equation*} 
\begin{equation}\label{eq15}
\vert 11 \rangle \rightarrow \left[ \dfrac{1-e^{-i\theta}}{2} \right] \vert 01 \rangle + \left[ \dfrac{1+e^{-i\theta}}{2} \right] \vert 10 \rangle.
\end{equation}

When the input states are $\vert 01 \rangle$ and $\vert 11 \rangle$, this circuit generates entangled states for $\theta \in (0, \pi)$. When $\theta=\pi/2$, the generated states are maximally entangled. 

\section{Implementation of SPE circuit}

In the following, we discuss the implementation of SPE circuit shown in FIG.~\ref{fig4}a, in the 127 qubit \textit{ibm\textunderscore kyoto} processor. For all two-qubit experiments, we used qubits 7 and 6 of \textit{ibm\textunderscore kyoto} processor as $q_0$ and $q_1$ respectively. ECR gate is used as basis gate in \textit{ibm\textunderscore kyoto} processor. ECR gate is implemented as a sequence of operations shown below. 

\begin{equation}\label{eq16}
\text{ECR} = e^{i \pi (\sigma_x \otimes \sigma_z)/8 } \left[I \otimes X \right] e^{-i \pi (\sigma_x \otimes \sigma_z)/8 }
\end{equation}

\begin{figure}[h]
\centering
\includegraphics[width=0.5\textwidth]{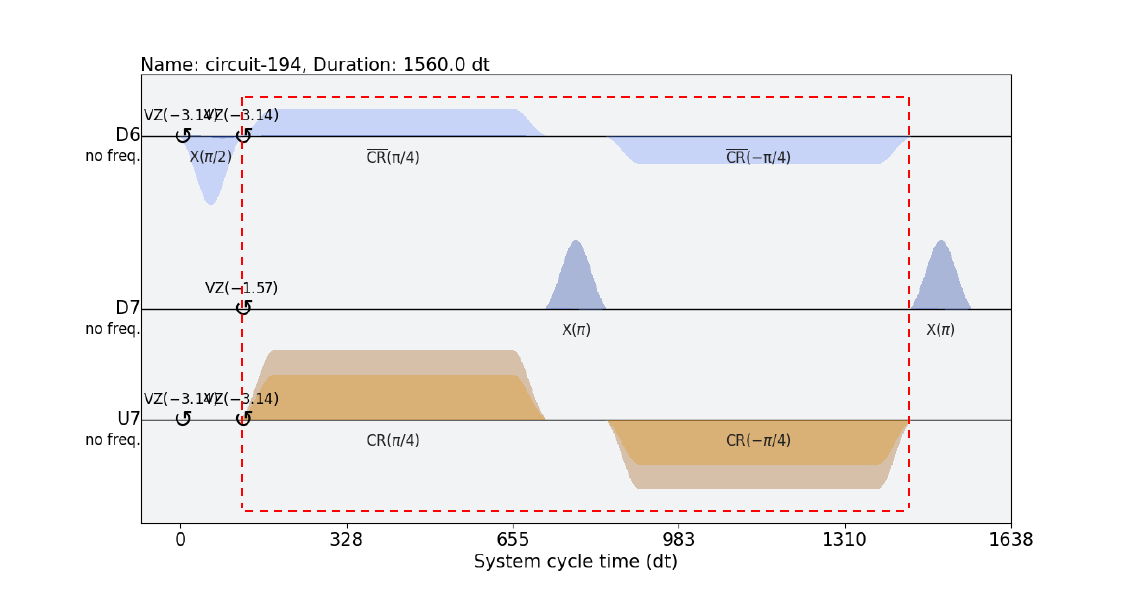}
\caption{Pulse sequence of CNOT gate between qubits 7 (control) and 6 (target) of \textit{ibm\textunderscore kyoto} processor. Pulse sequence enclosed inside red dashed box corresponds to ECR gate.}
\label{fig5}
\end{figure}

This operation involves application of $\pi/4$ cross-resonance (CR) pulse in the control channel (U7), $\pi$ pulse on the control qubit (Drive channel D7) and $-\pi/4$ CR pulse in the control channel (U7) as shown inside the red dashed box in FIG.~\ref{fig5}. In addition to CR pulses, ECR gate involves application of rotary pulses~\cite{Sheldon2016,Sundaresan2020} in the drive channel D6. The entire pulse sequence shown in FIG.~\ref{fig5} corresponds to CNOT gate with a global phase. This is obtained by applying local gates before and after ECR as shown below.

\begin{equation}\label{eq17}
e^{i\pi/4} \text{CNOT} = [I \otimes X] \text{ECR} \left[e^{i\pi \sigma_x/4 } \otimes e^{i \pi \sigma_z/4 } \right].
\end{equation}

Controlled-$\left[e^{-i\theta/2}\text{R}_\text{X}(-\theta)\right]$ operation can be performed with a global phase of $e^{i\theta/4}$ by the sequence of operations $[ I \otimes X ] [ \text{ECR} ]^{\theta/\pi} \left[ e^{i {\theta}\sigma_x /{4} } \otimes e^{i {\theta}\sigma_z/{4} } \right]$ with $\left[ \text{ECR} \right]^{\theta/\pi} = e^{i {\theta}(\sigma_x \otimes \sigma_z)/{8} } \left[ I \otimes X \right] e^{-i {\theta}(\sigma_x \otimes \sigma_z)/{8} }$. 

CR and rotary pulses in the ECR gate have Gaussian square envelopes;  $e^{\pm i {\theta}(\sigma_x \otimes \sigma_z)/{8} }$ in $[ \text{ECR} ]^{\theta/\pi}$ can be implemented by scaling the area of Gaussian square pulses as discussed in Ref.~\cite{Stenger2021,Earnest2021}. For scaling the area, we consider the following Gaussian square waveform. 
\begin{equation*}
f(t) = A f'(t)
\end{equation*}
with 
\begin{equation}\label{eq18}
f'(t) =
\begin{cases}
\exp \bigg[- \dfrac{(t-r)^2}{2\sigma^2}\bigg], & 0 \leq t \leq r \\ & \\
1, & r \leq t \leq r+w \\   & \\
\exp \bigg[- \dfrac{(t-(r+w))^2}{2\sigma^2} \bigg], & r+w \leq t \leq d
\end{cases}
\end{equation}

where $A$ is complex amplitude, $\sigma$ is the standard deviation, $w$, $d$ and $r=\dfrac{d-w}{2}$ are the duration of the embedded square pulse, duration of the entire pulse and risefall duration respectively. The area $(S)$ under this Gaussian square waveform is given by 

\begin{equation}\label{eq19}
S= \vert A \vert \bigg[ w + \sqrt{2 \pi} \sigma \text{erf}\bigg(\dfrac{r}{\sqrt{2} \sigma}\bigg) \bigg].
\end{equation}

The area under the Gaussian square waveform required to implement the operation, $e^{\pm i {\theta}(\sigma_x \otimes \sigma_z)/{8} }$, is given by 

\begin{equation}\label{eq20}
S_\theta = \dfrac{\theta}{\pi} \times S_\pi, 
\end{equation}

where $S_\pi$ is the area corresponding to CNOT operation. The waveform with the area $S_\theta$ can be obtained by changing either $w$ or both $w$ and $\vert A \vert$ as described in Ref.~\cite{Stenger2021}. 

Thus, using pulse-level programming~\cite{Alexander2020}, the SPE circuits can be implemented by scaling the area of CR and rotary pulses of controlled-$\left[e^{-i\theta/2}\text{R}_\text{X}(-\theta)\right]$ operation. Alternatively, the matrices of the SPE circuits [Eq.~\ref{eq14}] can be directly implemented using two ECR gates and local gates. However, implementation using pulse-level programming with rescaled CR and rotary pulses takes lesser implementation time for $\theta \in (0, \pi)$. As an example, the circuit implementing the matrix of B gate circuit (SPE circuit with $\theta = \pi/2$) between qubits 7 and 6 of \textit{ibm\textunderscore kyoto} processor and the corresponding pulse sequence are shown in FIG.~\ref{fig6}a and FIG.~\ref{fig6}b respectively. Pulse sequence of the same B gate circuit obtained by scaling the area of CR and rotary pulses of controlled-$\left[e^{-i\pi/4}\text{R}_\text{X}(-\pi/2)\right]$ operation is shown in FIG.~\ref{fig6}c. It can be noted that the total pulse duration in FIG.~\ref{fig6}c is approximately 0.8256 times the duration of the pulse sequence shown in FIG.~\ref{fig6}b. 

\begin{figure}
\begin{center}
\begin{tabular}{c}
\includegraphics[width=0.5\textwidth]{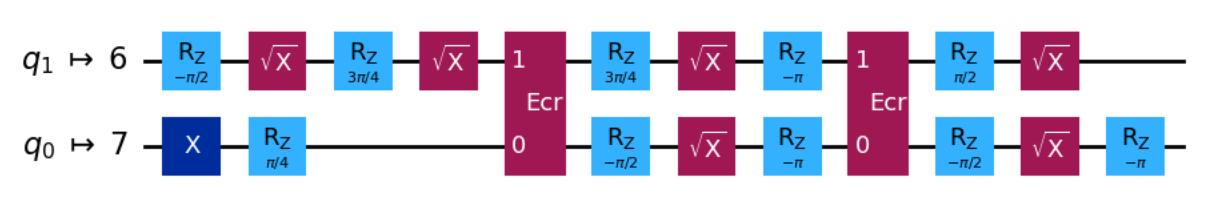} \\ (a) \\
\includegraphics[width=0.5\textwidth]{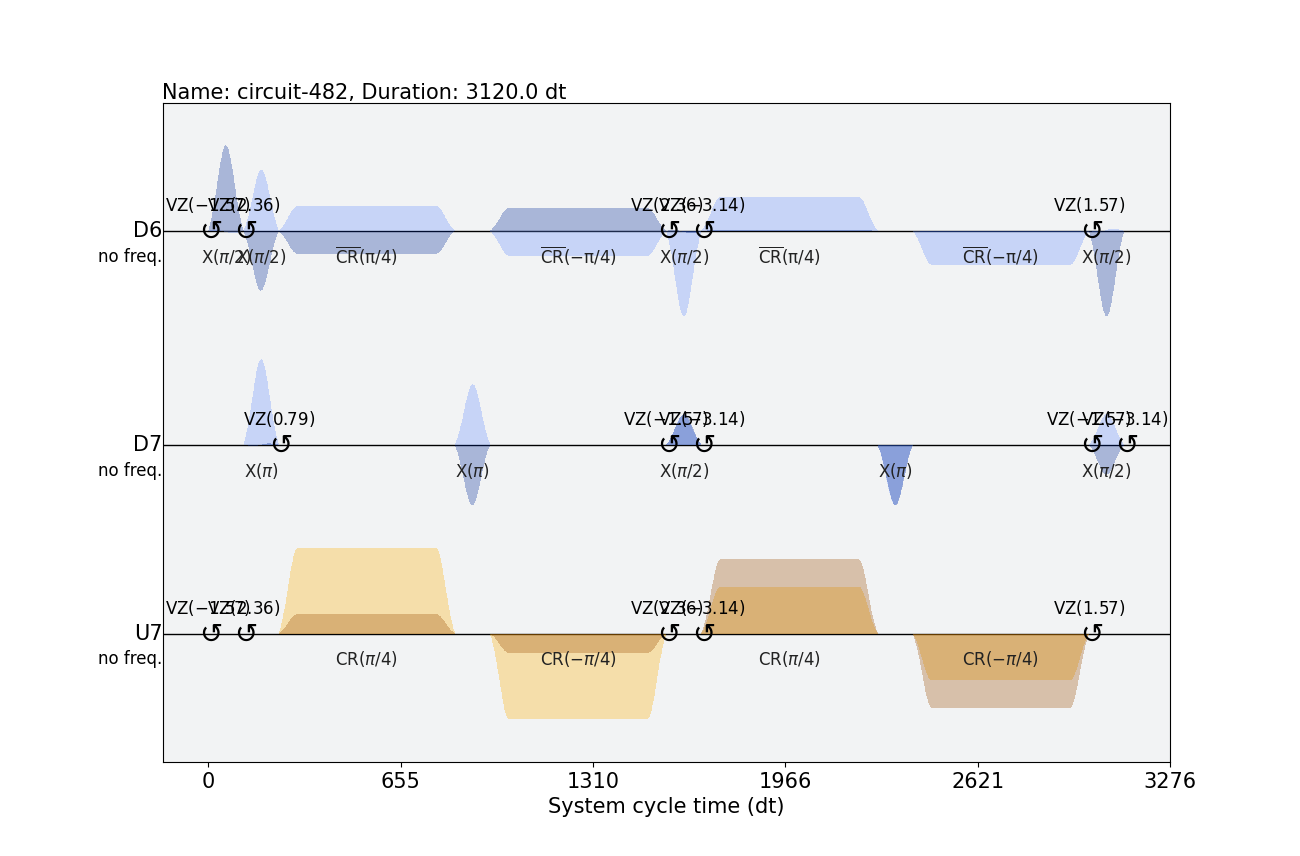} \\ (b) \\
\includegraphics[width=0.5\textwidth]{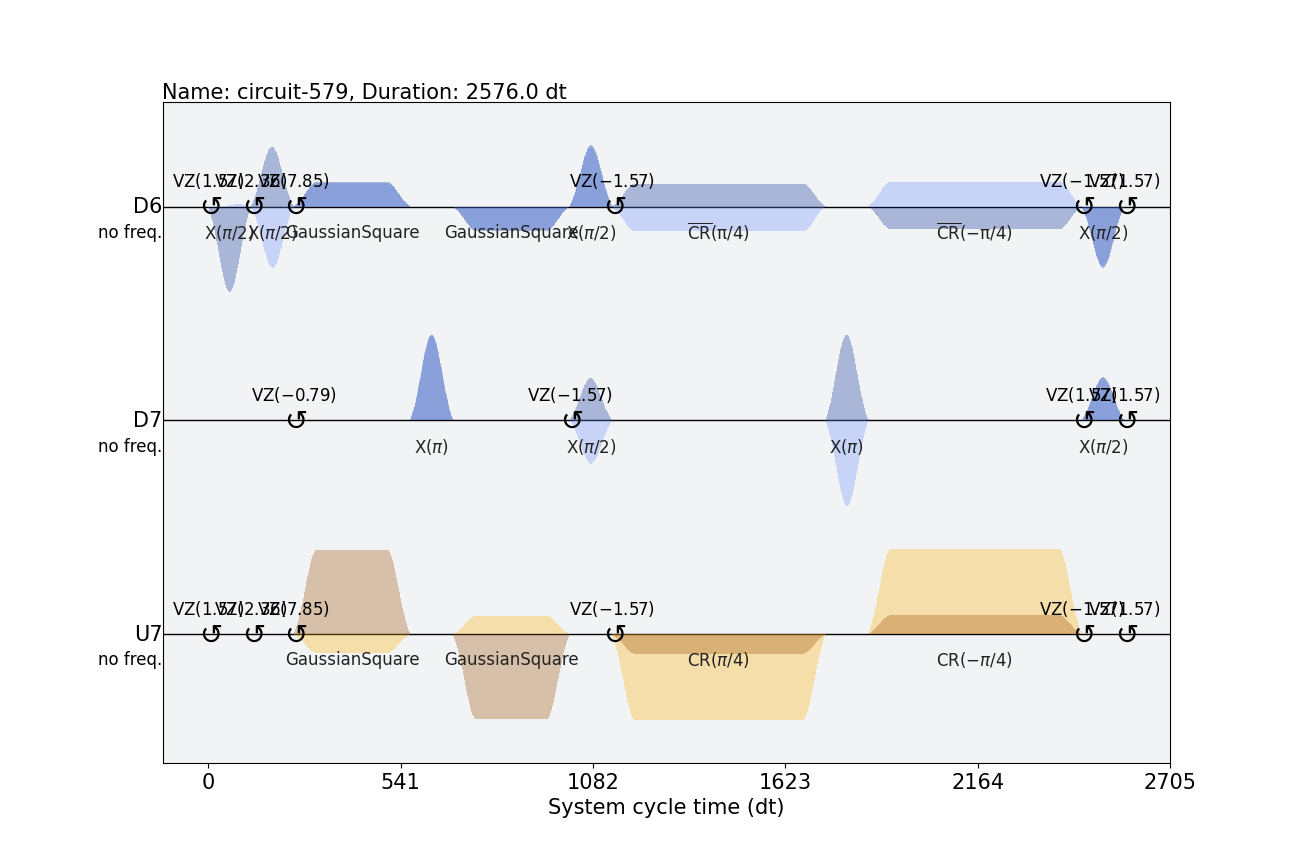}\\ (c)
\end{tabular}
\end{center}
\caption{(a) Circuit implementing the matrix of B gate circuit between qubits 7 and 6 of \textit{ibm\textunderscore kyoto} processor and (b) the corresponding pulse sequence. (c) Pulse sequence obtained by scaling the area of CR and rotary pulses which implements the same B gate circuit.}
\label{fig6}
\end{figure}

We implemented the SPE circuit using two ECR gates and pulse-level programming. The pulse-level programming method is used to rescale the area of CR and rotary pulses to implement controlled-$\left[e^{-i\theta/2}\text{R}_\text{X}(-\theta)\right]$ operation for nine equally spaced values of $\theta$ from $\theta = 0.1 \pi$ to $\theta=0.9 \pi$. We measured the probabilities of $\vert 01 \rangle$ and $\vert 10 \rangle$ for the input states $\vert 01 \rangle$ and $\vert 11 \rangle$ and plotted them as function of $\theta/\pi$ in FIG.~\ref{fig7}. 

\begin{figure}
\begin{tabular}{c}
\includegraphics[scale=0.5]{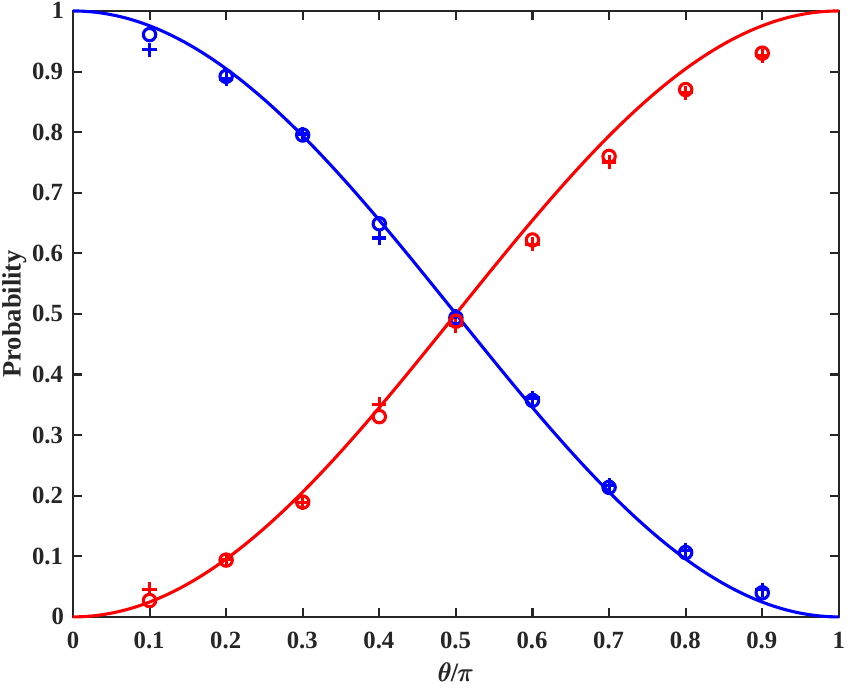} \\ (a) \\
\includegraphics[scale=0.5]{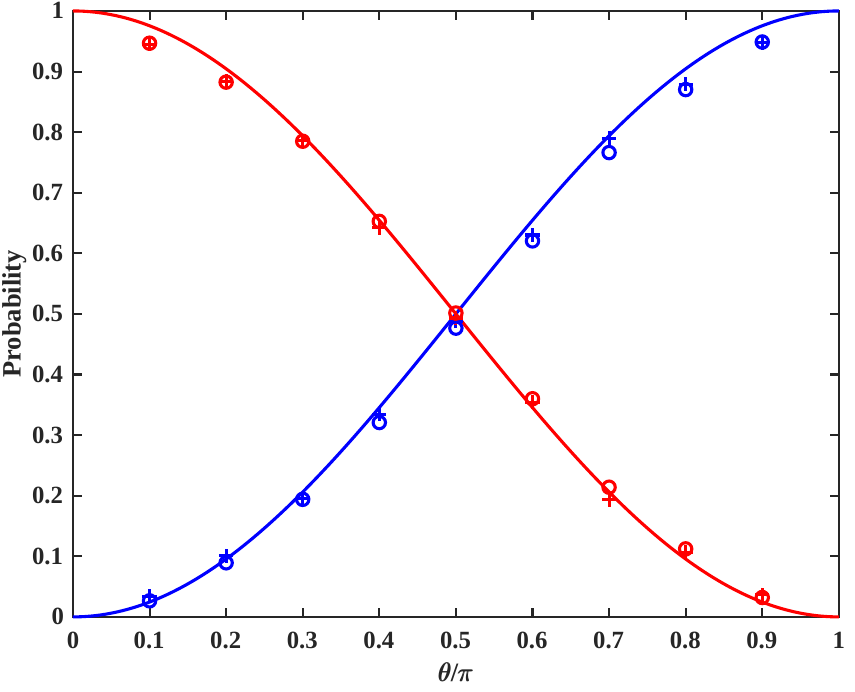} \\ (b)
\end{tabular}
\caption{Probabilities of the states $\vert 01 \rangle$ (blue in color) and $\vert 10 \rangle$ (red in color) for the input states (a) $\vert 01 \rangle$ and (b) $\vert 11 \rangle$. In both subfigures, the solid curves represent theoretical probabilities, the symbols $+$ and $\circ$ represent the probabilities in the case of implementation using two ECR gates and pulse-level programming respectively.}
\label{fig7}
\end{figure}

For the input state, $\vert 11 \rangle$, we performed QST experiment by doing measurements in the bases $\{XX, XY, XZ, YX, YY, YZ, ZX, ZY, ZZ\}$ and constructed the valid output density matrices as described in Ref~\cite{Smolin2012}. We calculated the fidelity of the obtained density matrices using the following formula~\cite{Jozsa1994}. 
 
\begin{equation}\label{eq20a}
F(\rho, \sigma) = \left[tr\left[ \sqrt{\sqrt{\rho} \sigma \sqrt{\rho} }\right] \right]^2
\end{equation}

where $\rho$ and $\sigma$ are the obtained and target density matrices respectively. We have plotted the fidelities in FIG.~\ref{fig7a}a. We also calculated the concurrence~\cite{Wootters2001} of the obtained density matrices and compared it with theoretical result in FIG.~\ref{fig7a}b.

\begin{figure}
\begin{tabular}{c}
\includegraphics[scale=0.5]{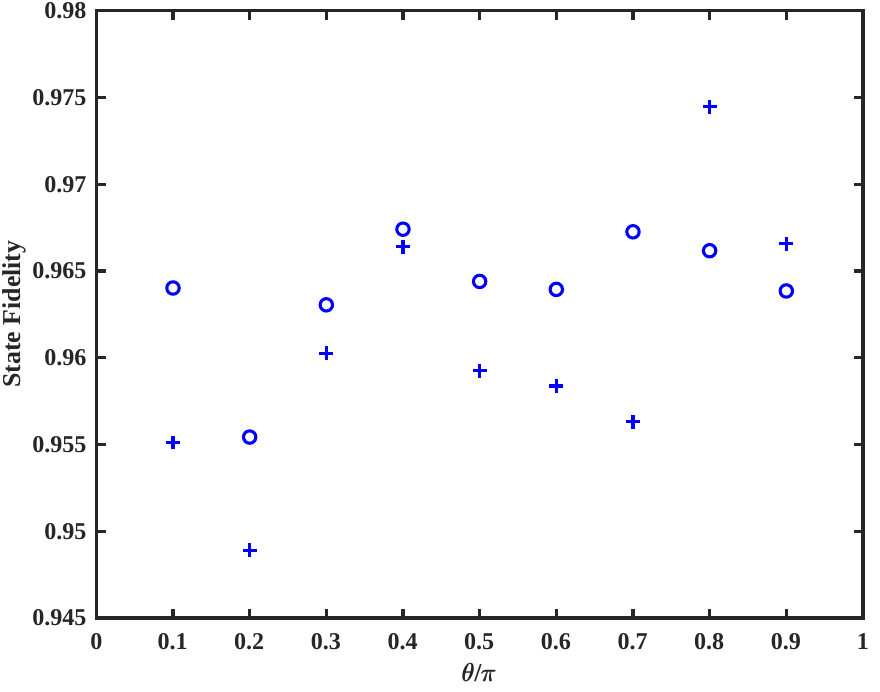} \\ (a) \\
\includegraphics[scale=0.5]{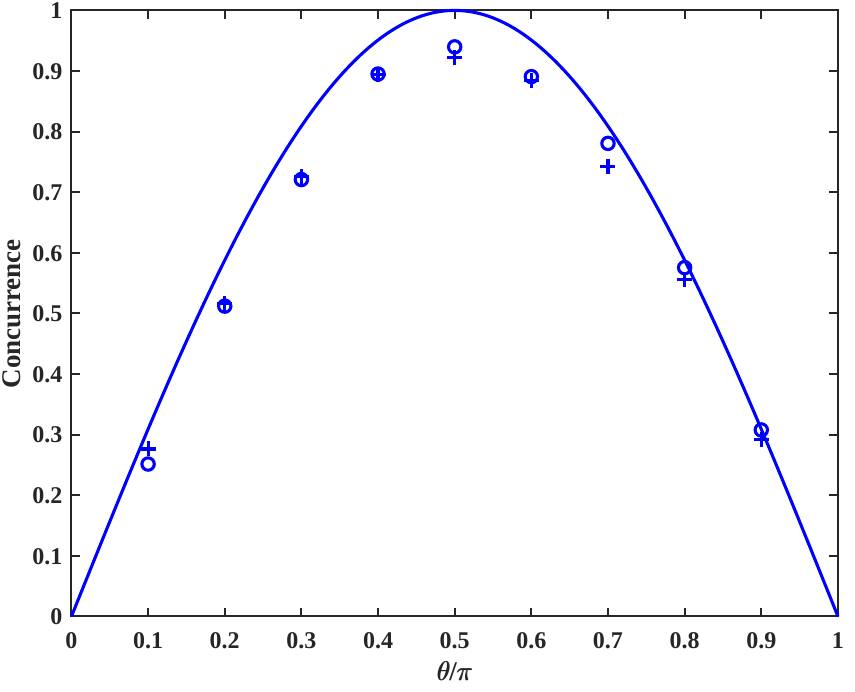} \\ (b)
\end{tabular}
\caption{(a) State fidelity and (b) Concurrence of the output states corresponding to the input state $\vert 11 \rangle$. In both subfigures, the symbols $+$ and $\circ$ correspond to the cases of implementation using two ECR gates and pulse-level programming respectively. The solid curve in the subfigure (b) is theoretical concurrence curve.}
\label{fig7a}
\end{figure}

We measured the average gate fidelity of the B gate circuit. The qubits are prepared in 16 combinations of the states $\left\{ \vert 0 \rangle, \vert 1 \rangle, \left[\vert 0 \rangle + \vert 1 \rangle \right]/\sqrt{2}, \left[\vert 0 \rangle + i \vert 1 \rangle \right]/\sqrt{2} \right\}$. For each of these input states, the output density matrix of the B gate circuit is determined by performing QST experiment. When the output density matrix of the B gate circuit is not a valid density matrix the likelihood density matrix is found using the method described in Ref.~\cite{Smolin2012}. Using these likelihood density matrices, the transformation of basis matrices, $\mathcal{E}(\rho_{k})$'s $(1 \leq k \leq 16)$ [$\rho_k$'s are $4 \times 4$ matrices with elements labelled by numbers from 1 to 16; in $\rho_k$, the $k^{\text{th}}$ element is one and all other elements are zero] are found. The average gate fidelity is found using the following formula~\cite{Nielsen2002}.

\begin{equation}\label{eq21}
\overline{F}(\mathcal{E},U) = \dfrac{16 + \sum_{jk} tr[\rho_k^\dagger U_j] tr[U U_j^{\dagger} U^{\dagger} \mathcal{E}(\rho_k)]}{80}
\end{equation}

where $U$ is the target unitary operation, and $U_j \in \{A \otimes B \}$ with $A, B \in \{I, X, Y, Z \}$.

The average gate fidelity obtained for the B gate circuit implemented by pulse-level programming is 0.96270 and in the case where it is implemented using ECR gate, it is 0.95511. In both cases of implementation, the results shown in FIG.~\ref{fig7}, FIG.~\ref{fig7a} and the average gate fidelity of the B gate circuit are nearly the same.  

\section{Applications}

As applications of the SPE circuits, we discuss the construction of universal two-qubit quantum circuits (UTQQCs) and generation of genuinely entangled multi-qubit states. Construction of UTQQCs using SPEs was reported in Ref.~\cite{Selvan2023}. We construct such UTQQCs using the SPE circuits as shown in FIG.~\ref{fig9}. The parameter of the SPE circuits in FIG.~\ref{fig9} is the Cartan co-ordinate $c_2 \in [0, \pi/2]$ instead of $\theta$. Thus the SPEs involved in the UTQQCs are from CNOT gate to B gate circuit. The parameters of the local gates $\text{R}_\text{Y}(c_{1(3)}) = \exp\left( \dfrac{-ic_{1(3)} \sigma_y}{2} \right)$ are the other two Cartan co-ordinates. The local invariants of the two circuits shown in FIG.~\ref{fig9} can be found as 

\begin{equation}\label{eq22}
G_1 = \dfrac{1}{4} \left[ \sum_{p=1}^3 \cos (2c_p) + \prod_{q=1}^3 \cos (2c_q) + i \prod_{r=1}^3 \sin (2c_r) \right]
\end{equation}
\begin{equation}\label{eq23}
G_2 = \sum_{p=1}^3 \cos (2c_p)
\end{equation}

These are the general expressions of local invariants given in Eqs.~\ref{eq3} and \ref{eq4} in terms of Cartan co-ordinates~\cite{Watts2013}. 

\begin{figure}[h]
\begin{tabular}{c}
\includegraphics[scale=0.6]{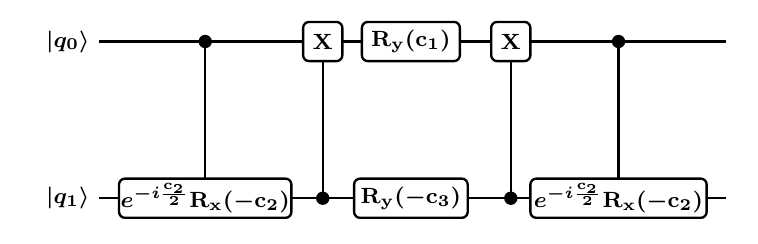} \\ (a) \\
\includegraphics[scale=0.6]{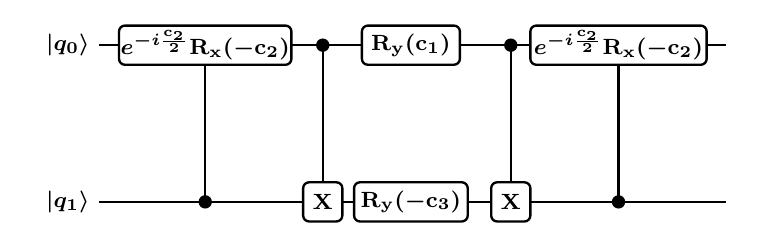} \\ (b)
\end{tabular}
\caption{Universal two-qubit quantum circuits (UTQQCs) constructed using SPE circuits shown in FIG.~\ref{fig4}.}
\label{fig9}
\end{figure}

Thus using these circuits any two-qubit gate belonging to a local equivalence class with Cartan co-ordinates $(c_1, c_2, c_3)$ can be generated by adding suitable single-qubit gates before and after these circuits. These circuits can easily be implemented using rescaled CR and rotary pulses as discussed in the last section.  

From the transformations given in Eq.~\ref{eq15}, it can be verified that the SPE circuit (shown in FIG.~\ref{fig4}a) transform the product basis $\left\{ \dfrac{\vert 00 \rangle \pm \vert 10 \rangle}{\sqrt{2}}, \dfrac{\vert 01 \rangle \pm \vert 11 \rangle}{\sqrt{2}} \right\}$ into the maximally entangled basis $\left\{ \dfrac{\vert 00 \rangle \pm \vert 11 \rangle}{\sqrt{2}}, \dfrac{\vert 01 \rangle + \vert 10 \rangle}{\sqrt{2}}, \dfrac{e^{-i\theta} \left[\vert 01 \rangle - \vert 10 \rangle \right]}{\sqrt{2}}  \right\}$. In the case of multi-qubit systems, there exist two inequivalent classes of genuinely entangled states: GHZ class and W class~\cite{Dur2000}. The GHZ state can be used to perform perfect teleporation of a qubit and superdense coding~\cite{Karlsson1998,Hao2001}. Perfect W states belonging to W class states can also used for perfect teleportation and superdense coding~\cite{Agrawal2006}. The B gate circuit can be used $(n-1)$ times to generate $n$-qubit GHZ and perfect W states as shown in FIG.~\ref{fig10}. 

\begin{figure}[h]
\begin{tabular}{c}
\includegraphics[scale=0.6]{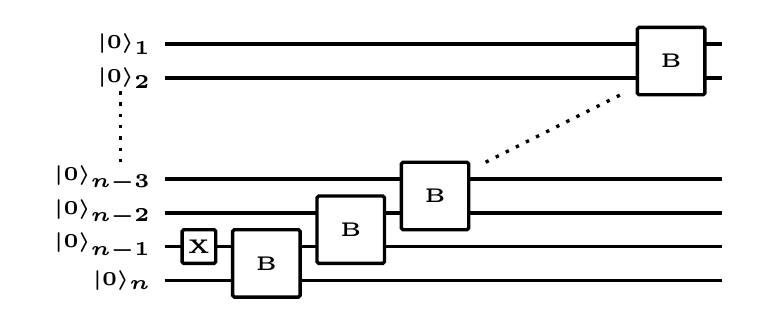} \\ (a) \\
\includegraphics[scale=0.6]{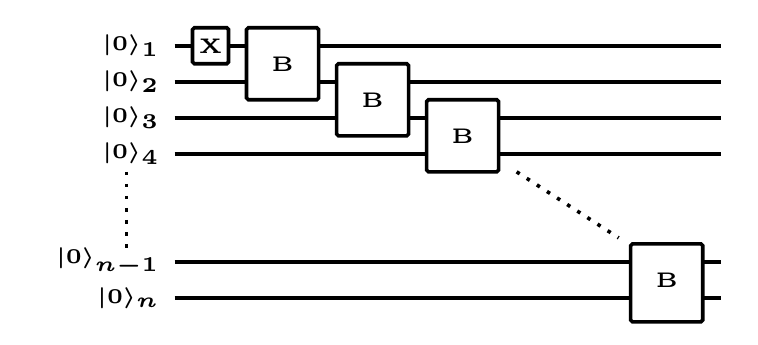} \\ (b)
\end{tabular}
\caption{Circuits consisting of $(n-1)$ B gates to generate (a) $n$-qubit GHZ class state and (b) $n$-qubit perfect W state.}
\label{fig10}
\end{figure}

The matrix of B gate is given by $U_{\text{SPE}}(\pi/2)$ [Eq.~\ref{eq14}]. The ciruit shown in FIG.~\ref{fig10}a transforms the initial product state as shown below.

\begin{equation}\label{eq24}
\bigotimes_{j=0}^{n-1} \vert 0 \rangle_j \rightarrow \dfrac{e^{-i\pi/4}}{\sqrt{2}} \left[ \vert 0 \rangle_{n-1} \bigotimes_{j=0}^{n-2} \vert 1 \rangle_j + i \vert 1 \rangle_{n-1} \bigotimes_{j=0}^{n-2} \vert 0 \rangle_j \right]
\end{equation}

The transformed state belongs to GHZ class which can be converted into GHZ state using suitable single-qubit gates. Similarly, the transformation caused by the circuit shown in in FIG.~\ref{fig10}b is shown below. 

\begin{equation}\label{eq25}
\bigotimes_{j=0}^{n-1} \vert 0 \rangle_j \rightarrow \dfrac{e^{-i\pi/4}}{\sqrt{2}} \left[ \vert 1 \rangle_0 \bigotimes_{j=1}^{n-1} \vert 0 \rangle_j + i \vert \Phi \rangle \right]
\end{equation}

where $\vert \Phi \rangle$ is the normalized $n$-qubit state given as 

\begin{equation*}
\vert \Phi \rangle = \dfrac{e^{i(n-2)\pi/4}}{2^{(n-2)/2}} \vert 1 \rangle_{n-1} \bigotimes_{j=0}^{n-2} \vert 0 \rangle_j +
\end{equation*}
\begin{equation}\label{eq26}
 ~~~~~~~\sum_{k=1}^{n-2} \dfrac{e^{i(n-k-3) \pi/4}}{2^{(n-k-1)/2}} \vert 1 \rangle_{n-k-1} \bigotimes_{j=0,j \neq n-k-1}^{n-1} \vert 0 \rangle_j
\end{equation}

The transformed state is a $n$-qubit perfect W state~\cite{Swain2023}. Similar to B gate, $(n-1)$ applications of ECR or CNOT gate can also generate $n$-qubit GHZ state~\cite{Cruz2019}. But the $n$-qubit perfect W state can't be generated with $(n-1)$ ECR gates. If the B gate circuit is implemented using the circuit shown in FIG.~\ref{fig6}a, then $2(n-1)$ ECR gates are required to generate $n$-qubit perfect W state. Pulse-level programming implementation of B gate takes lesser time than the implementation using ECR gates. Hence pulse-level programming implementation of B gate can be used to generate the perfect W state with better fidelity. To demostrate this, we generated three-qubit perfect W state in \textit{ibm\textunderscore kyoto} processor by using B gate circuit. We used qubits 7, 6, and 5 of \textit{ibm\textunderscore kyoto} processor as $q_0$, $q_1$ and $q_2$ respectively. For three-qubit case, the output state of the circuit shown in FIG.~\ref{fig10}b is given by 

\begin{equation}\label{eq27}
\vert \text{W}_\text{P} \rangle_3 = \dfrac{e^{-i\pi/4}}{\sqrt{2}} \left[ \vert 001 \rangle + i \dfrac{e^{-i\pi/4}}{\sqrt{2}} \vert 010 \rangle + i \dfrac{e^{i\pi/4}}{\sqrt{2}} \vert 100 \rangle \right]
\end{equation}

\begin{figure}[h]
\begin{tabular}{c}
\includegraphics[scale=0.425]{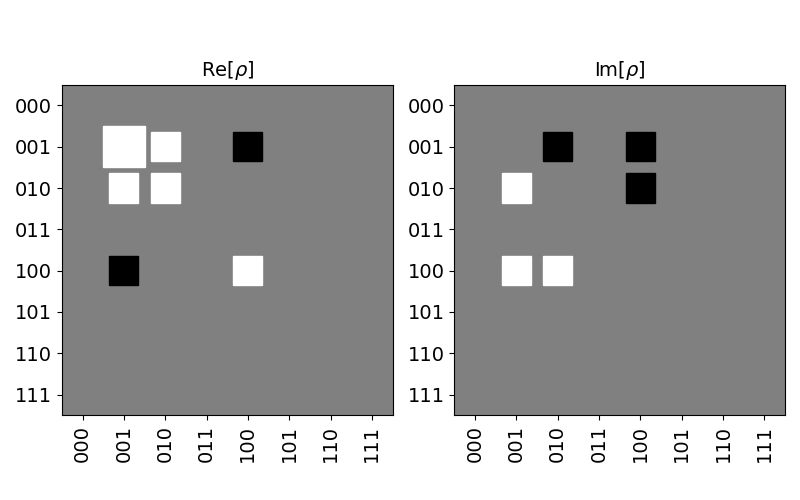} \\ (a) \\
\includegraphics[scale=0.425]{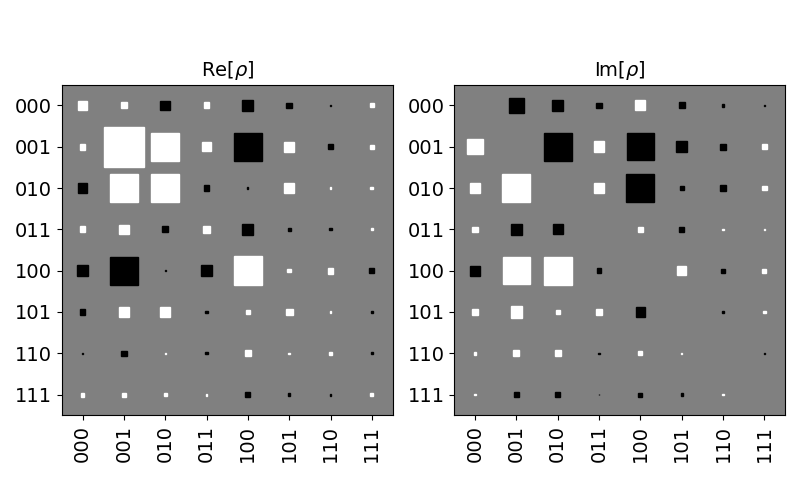} \\ (b) \\
\includegraphics[scale=0.425]{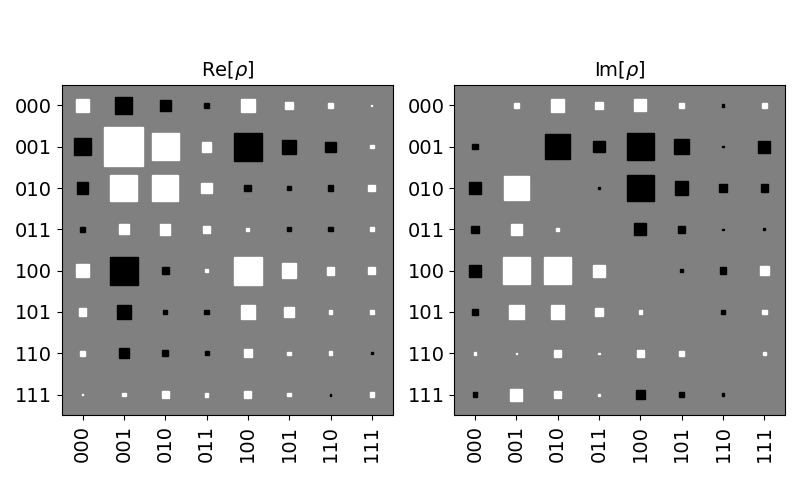} \\ (c)
\end{tabular}
\caption{Hinton diagram of the real and imaginary parts of the density matrix of three-qubit perfect W state given in Eq.~\ref{eq27}. (a) ideal case, (b) density matrix of the state generated using pulse-level programming, and (c) density matrix of the state generated using four ECR gates.}
\label{fig11}
\end{figure}

 We implemented the B gate circuit using pulse-level programming and ECR gates. We performed QST experiment, found the output density matrix and displayed them in FIG.~\ref{fig11}. We found the fidelity using Eq.~\ref{eq20a}. When the B gate circuit is implemented using rescaled pulses the fidelity of the obtained density matrix is $F=0.91753$. When it is implemented using ECR gates the obtained fidelity is $F=0.85432$. The state generation using the rescaled pulse implementation of B gate has provided better fidelity. 

\section{Conclusion}

To summarize, we have discussed the characteristics of argand diagram of squared eigenvalues of nonlocal part of SPEs, implementation and applications of SPE circuits with a special focus on B gate circuit. We have shown that the chord lengths in argand diagram of squared eigenvalues of nonlocal part of a two-qubit gate can be used to describe the ability of the two-qubit gate to generate entanglement. For each chord in the argand diagram associated with a two-qubit gate, there exists a pair of orthonormal product states constructed from the states belonging to magic basis such that the amount of entanglement generated out of those product states is proportional to the chord length. From the results presented in Ref.~\cite{Kraus2001}, we can say that the maximum entanglement generated by a non-perfect entangling two-qubit gate is proportional to the length of longest chord in the argand diagram associated with it. For the two-qubit gates that can transform a pair of orthonormal product state into a pair of orthonormal maximally entangled states, a chord passes through origin. For SPEs, two chords connecting two disjoint pairs of points pass through origin. Hence, SPEs can transform a product basis into maximally entangled basis. We have shown that the entangling power is proportional to the average of squared chord lengths. This result implies that the entangling power of a two-qubit gate depends only on the average entanglement it can generate over the product states $\left\{ \vert \psi^\pm_{jk} \rangle ; ~jk \in [12, 13, 14, 23, 24, 34] \right\}$. 

We have implemented SPE circuit using both ECR gate and pulse-level programming and presented the results. The average gate fidelities obtained for the B gate circuit implemented using ECR gate and pulse-level programming are nearly the same. However, when the B gate has to be used more than once, pulse-level programming is the more suitable method of implementation. Because, it reduces the total implementation time and improves the fidelity of operation. This is evident from the QST experiment results obtained for the generation of three-qubit perfect W state. The matrices corresponding to the four B gate circuits [FIG.~\ref{fig4} for $\theta=\pi/2$] have simple form and are more suitable to perform calculations. Hence, in the future, if the nonlocal part of B gate is used as native gate in IBM quantum processors~\cite{Wei2023}, then the matrix corresponding to one of the four B gate circuits can be used as basis gate similar to CNOT gate. 

Contruction of two-qubit gates using CX and fractional CX gates is discussed in Ref.~\cite{Peterson2020}. UTQQCs constructed using SPE circuits simplify the task of finding the decomposition of an arbitrary two-qubit gate. Hence, the UTQQCs will be very useful for doing quantum computation in IBM processors provided the cross-resonance interaction can be achieved accurately for arbitrary values of $\theta$ and the qubits have long relaxation and dephasing times. 
 
\vspace*{1em}

\textbf{Acknowledgement:} We acknowledge the use of IBM Quantum services for this work. The views expressed are those of the authors, and do not reflect the official policy or position of IBM or the IBM Quantum team.




\end{document}